\theoremstyle{plain}
\newtheorem{theorem}{Theorem }[section]
\newtheorem{corollary}{Corollary }[theorem]
\newtheorem{lemma}[theorem]{Lemma }
\newtheorem{assumption}[theorem]{Assumption }
  \providecommand\BibTeX{{%
    \normalfont B\kern-0.5em{\scshape i\kern-0.25em b}\kern-0.8em\TeX}}}
\begin{document}

\title{Theoretical Modeling of the Iterative Properties of User Discovery in a Collaborative Filtering  Recommender System}

\author{Sami Khenissi}
\email{sami.khenissi@louisville.edu}
\affiliation{%
  \institution{Knowledge Discovery \& Web Mining Lab,  University of Louisville}
  \streetaddress{2301 S 3rd St}
  \city{Louisville}
  \state{Kentucky}
  \postcode{40208}
}
\author{Mariem Boujelbene}
\email{mariem.boujelbene@louisville.edu}
\affiliation{%
  \institution{Knowledge Discovery \& Web Mining Lab, University of Louisville}
  \streetaddress{2301 S 3rd St}
  \city{Louisville}
  \state{Kentucky}
  \postcode{40208}
}
\author{Olfa Nasraoui}
\email{olfa.nasraoui@louisville.edu}
\affiliation{%
  \institution{Knowledge Discovery \& Web Mining Lab, University of Louisville}
  \streetaddress{2301 S 3rd St}
  \city{Louisville}
  \state{Kentucky}
  \postcode{40208}
}

\renewcommand{\shortauthors}{Khenissi, et al.}
\renewcommand{\shorttitle}{Theoretical Modeling of the Iterative Properties of User Discovery in Collaborative Filtering}

\begin{abstract}
    The closed feedback loop in recommender systems is a common setting that can lead to different types of biases. Several studies have dealt with these biases by designing methods to mitigate their effect on the recommendations.  However, most existing studies do not consider the iterative behavior of the system where the closed feedback loop plays a crucial role in incorporating different biases into several parts of the recommendation steps.  

    We present a theoretical framework to model the asymptotic evolution of the different components of a recommender system operating within a feedback loop setting, and  derive theoretical bounds and convergence properties on quantifiable measures of the user discovery and blind spots. We also validate our theoretical findings empirically using a real-life dataset and empirically test the efficiency of a basic exploration strategy within our theoretical framework.  
    
    Our findings lay the theoretical basis for quantifying the effect of feedback loops and for designing Artificial Intelligence and machine learning algorithms that explicitly incorporate the iterative nature of feedback loops in the machine learning and recommendation process.
\end{abstract}
\begin{CCSXML}
<ccs2012>
<concept>
<concept_id>10003120.10003121.10003126</concept_id>
<concept_desc>Human-centered computing~HCI theory, concepts and models</concept_desc>
<concept_significance>500</concept_significance>
</concept>
<concept>
<concept_id>10002951.10003227.10003351.10003269</concept_id>
<concept_desc>Information systems~Collaborative filtering</concept_desc>
<concept_significance>500</concept_significance>
</concept>
<concept>
<concept_id>10010147.10010257</concept_id>
<concept_desc>Computing methodologies~Machine learning</concept_desc>
<concept_significance>500</concept_significance>
</concept>
<concept>
<concept_id>10010147.10010178</concept_id>
<concept_desc>Computing methodologies~Artificial intelligence</concept_desc>
<concept_significance>500</concept_significance>
</concept>
</ccs2012>
\end{CCSXML}

\ccsdesc[500]{Information systems~Collaborative filtering}
\ccsdesc[500]{Computing methodologies~Machine learning}
\ccsdesc[500]{Computing methodologies~Artificial intelligence}

\keywords{Recommender Systems, Fairness and Bias in Artificial Intelligence}

\maketitle

\section{Introduction}
\label{introduction}

Concerns about the possible social impact of Artificial Intelligence and Machine Learning-driven recommender systems, particularly in the political and economical areas \cite{10.1145/3240323.3240370}, have given rise to disturbing questions concerning their consequences, including limiting human discovery, and as a result, human decision making and behavior. 
The social influence of recommender systems is particularly concerning because an increasing part of our online interactions consist of responses to the outputs of recommender systems on different platforms, such as social media, news, and entertainment websites. Our interactions with recommender systems naturally generate \textit{new data} which is then used to update or retrain the next iteration of recommender system models, thus effectively creating a closed feedback loop \cite{DBLP:conf/nips/SinhaGR16,nasraoui2016human} that affects the stream of information \textit{visible} to us, and consequently, our capacity to discover information online. Possible impacts of limited discovery include the creation of filter bubbles and polarization \cite{badami2017detecting,badami2018prcp,Badami}. Previous research has focused on dealing with this problem by aiming at increasing the diversity of the recommendation through several post-processing and preprocessing techniques \cite{Zhou4511,BarrazaUrbina2015XPLODIVAE} . Only a few have studied the \textit{iterative} behavior \cite{nasraoui2016human,kdir18}, and more specifically, the asymptotic behavior of the recommender system. 

In this paper, we start with presenting a theoretical model for the functioning of a generic Collaborative Filtering Recommender System. This model helps clarify our understanding of the different components of such a system, so that we are able to dissect the feedback loop. We then present a theoretical definition of the notions of \textit{user discovery} and \textit{blind spot} and study their iterative behavior by deriving asymptotic bounds that govern several metrics that can quantify these notions.
Because our proofs make use of certain assumptions, we also present an empirical study to assess the validity of these assumptions using a generic (common) recommender system model.
Finally, we validate our theoretical findings using an empirical study of the iterative closed loop interactions within several common recommender system algorithms using real and semi-synthetic data. We also evaluate the impact of common exploration strategies on the human discovery metrics and explore to what extent our theoretical bounds remain valid under these strategies. 


Our theoretical and empirical findings suggest that future solutions for the feedback loop bias problem need to incorporate a dynamic component that can adapt to the iterative nature of the recommendation process. 

Our contributions can be summarized below:

\begin{itemize}
    \item We present a theoretical study of the evolution of a recommender system's feedback loop by looking at the asymptotic behavior of its components.
    
    \item We provide a theoretical bound on the human discovery resulting from interacting with a generic recommender system, under certain assumptions. 
    \item We verify our theoretical results, empirically \footnote{https://github.com/samikhenissi/TheoretUserModeling}, using a semi-synthetic dataset.
    \item We empirically study the limit of basic exploration approaches and their effectiveness in  increasing human discovery
\end{itemize}

\section{Related Work}

\label{Related Work}

The term \textit{bias} in recommender systems has been used to represent a wide, but often related, variety of biases \cite{10.1145/3209581}. These include popularity bias, diversity bias, exposure bias, display bias, iterative bias, etc.  We will therefore summarize the bias types that we consider to be the most related to our work\footnote{We do not consider certain types of biases such as display bias, user specific bias, etc, which we leave  for future work.}. 

\subsection{Diversity Bias}
Diversity bias studies the variety between items within a recommendation list. This is probably the earliest type of bias studied, as it directly relates to the overfitting problem in machine learning \cite{KUNAVER2017154}. 
Some of the prior work \cite{Zhou4511} focused on the diversity of recommendations by incorporating different techniques to diversify the final recommendations. These techniques, including Determinantal Point Process  \cite{ddp}, regularization techniques \cite{IJCAI136511}, optimization techniques \cite{Cheng2017}, and so on, focus on the \textit{model learning step} of the recommender system algorithm. In fact, by designing a method that effectively \textit{learns} to promote the recommendation of diverse items, we can alleviate the problem of diversity bias.

Other methods have focused on post-processing methods by designing a new recommendation strategy based on exploration techniques, such as Multi-Armed-Bandits algorithms \cite{mab} or other post-filtering methods using personalized recommendation selections \cite{BarrazaUrbina2015XPLODIVAE}.

Diversity bias causes several known issues in recommender systems and information retrieval such as filter bubbles and polarization \cite{badami2018prcp,doi:10.1080/1369118X.2016.1271900}. 
We argue that diversity bias itself can be a direct consequence from the iterative behavior or the closed feedback loop in recommender systems. 

\subsection{Popularity Bias}

Another type of studied bias is popularity bias \cite{10.1145/2043932.2043957}. It often translates to studying the long tail items. In fact, popularity bias causes some unfairness between items, since popular items get increasingly promoted compared to other items \cite{unfair}.

In order to mitigate this bias, several techniques have emerged, ranging from regularization \cite{10.1145/3109859.3109912,8696023} to re-ranking and post-processing techniques \cite{Abdollahpouri2019ManagingPB,Jannach2015}. These methods aim at promoting the long tail items in order to provide a fair exposure to these items and reduce unfairness.

Even though it is often distinct from iterated bias (Section \ref{iteratedbias}), popularity bias is largely affected by the behavior of the closed feedback loop. In fact, for a collaborative filtering algorithm where no external data is included in the training and recommendation process, popular items get recommended often because their high number of ratings helps the algorithm learn to group them more accurately. This bias phenomenon can be considered to be related to the cold start problem \cite{10.1145/3109859.3109912}. Popularity bias is exacerbated when the user's feedback is used in the next iteration of training the model.

\subsection{Exposure Bias}

We can think of exposure bias as the iterative bias (Section \ref{iteratedbias}) studied in a \textit{single} fixed iteration. In this setting, we study how the user was exposed to the items before providing the ratings and then how this will affect the next predictions. This is equivalent to studying the Missing Not At Random (MNAR) problem \cite{10.1145/3240323.3240355}. In fact, by studying the distribution of the missing data, we can infer the effect of the bias on the predictions and/or the training. 

Within this scope, previous studies have tried to estimate the probability that the user is exposed to an item before rating it, and then use this estimate to alter the training algorithm by designing better estimators of the performance of the algorithm \cite{DBLP:journals/corr/SchnabelSSCJ16,NIPS2019_9628, pmlr-v97-wang19n, 10.1145/3209978.3209986} or using regularization techniques \cite{khenissi2020modeling}.

Exposure bias is directly related to the closed feedback loop in a recommendation system. In our work, we are interested in the temporal evolution of the exposed set and the missing data (i.e the unseen items).

\subsection{Iterative or Closed Feedback Loop Bias}
\label{iteratedbias}
\begin{figure}[ht]
\vskip 0.2in
\begin{center}
\centerline{\includegraphics[width=0.5\linewidth]{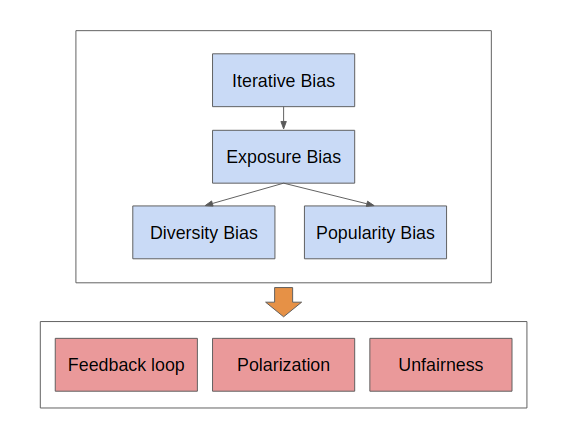}}
\caption{Hierarchical segmentation of bias types. Iterative bias is the most general type of bias as it is caused by the iterative interaction process of the recommender system with the user. This behavior is what causes several other bias types such as exposure bias and diversity bias. It also leads to popular phenomena such as Filter bubbles and Polarization}
\label{fig:hierar}
\end{center}
\vskip -0.2in
\end{figure}

If we classify the bias types into a hierarchical scheme, as shown in \textbf{Figure \ref{fig:hierar}},iterative or closed feedback loop bias would subsume all the aforementioned types of bias. Depending on the initial state of the system, the closed feedback loop incorporates the user selection bias, which is itself affected by the previously seen items, suggested by the previous recommendations \cite{10.1145/3308560.3317303}. Hence, understanding this iterative behaviour is important in order to control its effects. \cite{Jiang_2019} provided a theoretical and empirical study of how the iterative process can affect the user preferences, proving that if a user is trapped in a filter bubble with no strategy to escape, the recommendation quality will decrease. \cite{10.5555/3327757.3327773} modeled the feedback loop through dynamically modeling the dynamic item missingness of the ratings by combining Markov Models and Matrix Factorization in order to improve the quality of the recommendations. \cite{EPPERLEIN2019116} and \cite{nasraoui2016human} also used Markov models to model the closed loop.  Other studies \cite{10.1145/3287560.3287583,khenissi2020modeling} tried to empirically model the iterative bias by using simulations that study the effect of the algorithms on various diversity metrics. Furthermore, \cite{DBLP:conf/nips/SinhaGR16} tried to deconvolve the feedback loop to understand how it affects the final ratings 

Although these studies consider the temporal dependency in recommender systems, a simplified theoretical modeling of how the system evolves has been missing. Our work falls under this category, where we consider an \textit{iterative} collaborative filtering recommender system environment and further formulate assumptions under which the system evolves to a static state characterized by limited human discovery.


\section{Theoretical results }
\subsection{Notation}
\label{Notation}

\begin{table*}[ht]
\caption{Summary of the notation used in the paper}

\label{tab:tabNotation}

\resizebox{\linewidth}{!}{%

\begin{tabular}{ll|ll|ll}

Symbol & Meaning                                  & Symbol  & Meaning                             & Symbol       & Meaning                                   \\
I      & Set of all items                          & G       & \ Set of item groups         & $B_t$        & Blind spot of the user up to iteration t       \\
U      & Set of Users                             & $Rec_t$ & recommended groups of items at iteration t    & $\Delta S_t$ & User discovery                            \\
$f_t$  & Recommender selection function           & Rel     & Set of relevant groups of items               & $\Delta B_t$ & Evolution of the blind spot               \\
t      & iteration                                & $S_t$   & Seen item groups up to iteration t & $| . |$      & Cardinality                               \\
$g$   & a group of items ($g\in G$)                           & $\mathbbm{1}_u$      & Indicator function of condition u   & $P_f$        & Probability that an item from a given group will be recommended  \\
$h_p$  & Ranking function related to measure $P_f$ & O       & Ordering score provided by P        & M            & Mapping function that maps items to groups            
\end{tabular}
}
\end{table*}
We start by defining the notation we will use throughout the paper that we also summarize in Table \ref{tab:tabNotation}. A recommender system in a collaborative filtering setting can be fully defined by the quadruple $(I,U,f_t,t)$ where:
\begin{itemize}
    \item $I$ represents the set of all available items. 
    \item $U$ is the set of users. \footnote{Users in this paper can refer to groups or neighborhoods of users (users that share the same characteristics based on a given distance measure).}
    \item $f_t: G \rightarrow I$  is the selection function of the recommender system which selects a fixed number of items to recommend to a given user at iteration $t$.
    \item $t$ is the iteration  number in the recommendation process as we are interested in the sequential dependency of the system.
\end{itemize}

In this paper, we are interested in studying the user discovery behaviour and therefore we study the interaction of "groups" of items, as was done by \cite{Jiang_2019}, rather than "single" elements. A group represents a neighborhood of similar items based on some criteria such as genre, type, or nature of items.
We therefore define $G$ as the set of item groups. This will be useful later to define the user discovery capacity and the blind spot. In fact, items that are from the same group of items (share the same type, genre or characteristics, etc) as items that have been seen before may not be considered to be significantly new and may be considered to not contribute to increase human discovery, as much as items from totally unseen groups. We also define $M$: $I \rightarrow G$, a mapping function that maps the items to a set of item groups.

We further define $Rec_t$ as the set of  item groups that are recommended to a user at iteration $t$ and $Rel$ as the set of  item groups that are relevant to a given user.

We also define the set $S_t$ which represents the set of item groups that have been seen up to iteration $t$. As this set  influences the \textit{discovery capacity} of the user, we are interested in its evolution throughout the iterations. We illustrate the process of the iterative recommendation setting studied in \textbf{Figure \ref{fig:process}}. 

\begin{figure}[ht]
\begin{center}
\centerline{\includegraphics[width=0.8\linewidth]{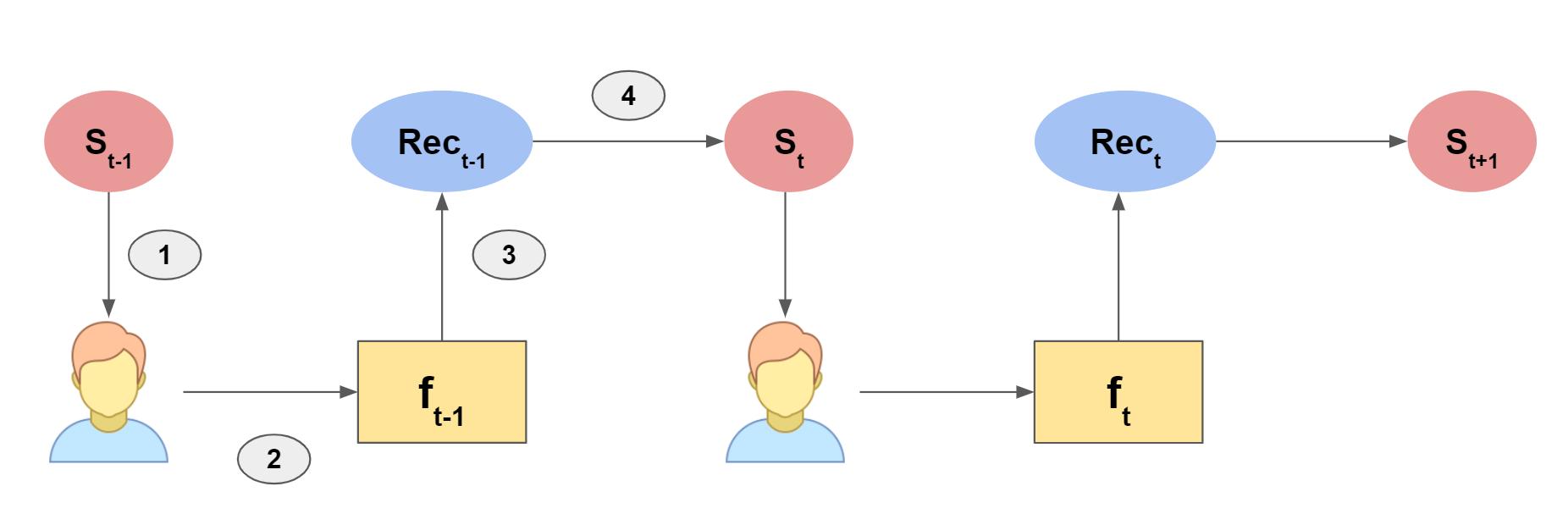}}
\caption{Illustrative example for an iterative recommendation process. \textbf{Step 1} represents the event of the user seeing some item groups $S_{t-1}$; then in \textbf{Step 2}, the user provides ratings to the seen items. In \textbf{Step 3}, the ratings are used by the recommender system function $f_{t-1}$ in order to provide recommendations represented by $Rec_{t-1}$. These recommendations are finally added in \textbf{Step 4} to the previously seen item sets to form $S_t$. The process continues for several iterations using the same steps.}
\label{fig:process}
\end{center}
\vskip -0.2in
\end{figure}

Finally, we define $B_t = S_t^c \cap Rel$ as the \textit{blind spot} of the user at iteration $t$. The blind spot in a recommender system defines the item groups that have not been seen by the user, \textit{although they are relevant}.

In practice, we want to increase the user's discovery ability (by increasing the size of $S_t$), but without recommending irrelevant groups. This goal translates to decreasing the size of the blind spot $B_t$.

We are interested in studying the evolution of $S_t$ and $B_t$ throughout increasing iterations. For this reason, we define:
\begin{itemize}
    \item $\Delta S_t = |S_t| - |S_{t-1}|$: The increase in the number of new groups of items between consecutive iterations.
    \item $\Delta B_t =  |B_{t-1}| - |B_t| $: The decrease in the number of unseen and relevant groups of items between consecutive iterations.
    \item $\sum_t \Delta S_t$ is the user's \textit{discovery capacity} and $\Delta_n S = \frac{1}{n} \sum_{t=0}^{n} \Delta S_t$ is the human average discovery.
\end{itemize}

In the next sections, we study the asymptotic behavior of $\Delta_n S$ and $\Delta_n B$ and show that under certain assumptions, $\Delta_n S$ decreases exponentially with the iterations. 

\subsection{Assumptions}
\label{Assumption}

We make three assumptions about the nature of the iterative recommendation process, that will allow us to study the interaction between the user and the recommender system, while being isolated from any extraneous factors. The assumptions will be later tested empirically in Sec 4.1 and further challenged by relaxing the ranking requirement by adding random exploration.

\begin{assumption}[Vacuum Assumption]
\label{A1}

Let $(I,U,f_t,t)$ be a recommender system, $S_t$ is the set of seen item groups and $Rec_t$ is the set of newly recommended item groups at iteration $t$. We assume that:
\begin{equation}
    S_{t+1} = S_t \cup Rec_t
\end{equation}
\end{assumption}

This assumption is necessary to allow us to study the impact of the recommender system in isolation of any other external factors. Although it would be interesting to study the impact of extraneous factors that could lead to the user seeing an item without interacting with the seen items, we believe that this would be premature at this stage. These additional factors could include display bias and user discovery occurring outside the recommender system itself.

\begin{assumption}[Perfect Feedback Assumption]
\label{A2}
Let $(I,U,f_t,t)$ be a recommender system and let $M$ be the item-to-group mapping function, $S_{t-1}$ be the set of seen groups of items, and $Rec_t$ be the set of newly recommended groups of items at iteration $t$. We assume that:
\begin{equation}
    Rec_t = M \circ f_t(S_{t-1})
\end{equation}
\end{assumption}

This assumption simply states that the recommender system provides the next set of recommendations using the information provided by $S_t$. This is the case for a collaborative filtering recommender system that relies on feedback in the form of ratings from the user on the seen item groups. This assumption also eliminates the possibility of the user seeing an item without rating it.

\begin{assumption}[Ranking Assumption]
\label{A3}
Let $g_i , g_j \in G$ be two groups of items and let $d:G \times G \rightarrow \mathbb{R}$ be a distance measure between two sets in $G$. If $d(g_i,S_{t-1}) \leq d(g_j,S_{t-1})$, then we have $P_f(g_j \in Rel) \leq P_f(g_i \in Rel)$  
\end{assumption}

What this assumption states is that a group of items has a higher probability to be recommended if it has been seen before.
In other words, the recommender system maximizes item relevance by ranking the items closest to those that have already been seen, and relevant groups of items, higher than other items. 

This is probably the strongest assumption we make but it has strong practical justifications. The assumption suggests that the recommendation strategy is purely exploitation-based and excludes strategies based on exploration such as Multi-Armed-Bandits methods. Most collaborative filtering recommender systems such as Matrix Factorization and Nearest Neighbors Methods, which are some of the most popular methods, satisfy this assumption, as they are trying to minimize a distance, embedded in the objective function, between the unseen items and the seen items. In Section \ref{Empval} we will present an empirical validation of this assumption. Then in Section \ref{explorationsection} we will challenge this assumption by adding an exploration component.

\subsection{Asymptotic behavior of human discovery}
It should not come as a surprise that a recommender system will eventually limit the variety of new items provided to the user. Our aim however is to provide a  \textit{theoretical framework} to estimate the speed of convergence of this discovery limitation and hopefully lay the theoretical ground for designing future machine learning techniques that strive to optimize the discovery capacity in addition to other measures of performance.

Our main results in this section will be Lemma \ref{theorem1} and Theorem \ref{theorem2}. Lemma \ref{theorem1} defines the mathematical nature of the \textit{Seen} item space and hence provides tools to explore and control the recommendation strategies defined in that space. Theorem \ref{theorem2} provides a convergence argument along with a theoretical bound that gives insight about the convergence speed of the user discovery capacity. We start by laying the groundwork in the following Lemma.
\begin{lemma}
Let $f$ be the generic selection function of a recommender system,  $S_t^c$ be the complement of $S_t$, and $g_j \in G$ be a group of items. if $f$ satisfies \textbf{Assumption \ref{A3}} and $|S_t|>|Rec_t|$ then 
\begin{equation}
    E(|\{g_j \in S_t^c \cap Rec_t\}|) = 0
\end{equation}
\begin{proof}
We first calculate the expected value of the number of recommended item groups. Since we have a fixed set of recommendations, we have:
\begin{equation*}
    E(|\{g_j \in G \cap Rec_t\}|) = |Rec|,
\end{equation*}
where $|Rec| = |Rec_t|$ $ \forall t.$ Then
\begin{equation*}
    E(|\{g_j \in (S_t \cup S_t^c) \cap Rec_t\}|) = |Rec|
\end{equation*}
\begin{equation*}
    E(|\{g_j \in (S_t \cap Rec_t) \cup  (S_t^c \cap Rec_t) \}|) = |Rec|
\end{equation*}
By the linearity of expectation, we have:
\begin{equation*}
    E(|\{g_j \in (S_t \cap Rec_t) \}|) +   E(|\{g_j \in S_t^c \cap Rec_t) \}|) = |Rec|
\end{equation*}
Next, we will find an expression for $P(g_j \in Rec_t) $. 

Let us denote the random variable $O = P_f(g_j \in Rel)$ that represents the score (for example predicted normalized rating) of each item group as provided by $f$, then we can define a ranking function $h_P: G \rightarrow \{0,1..|G|\}$ that maps $g_j$ to its rank among all groups of items. 

Considering the ordered statistics $\{ O_{(1)},O_{(2)},....,O_{(|G|)}  \}$, we get $h_P(g_j) = q $, where $ O_{(q)} =  P(g_j \in Rel) $.

Hence, $P(g_j \in Rec_t) = 1 $ if $h(g_j)  \leq |Rec_t| $ and $P(g_j \in Rec_t) = 0 $ otherwise; meaning that an item group is in the recommendation list if its rank is less than $|Rec_t|$. This is based on a purely exploitation strategy that only considers the rank of the relevance estimate of the item.

Therefore, we can affirm that  $P(g_j \in Rec_t) = \mathbbm{1}_{(h(g_j)  \leq |Rec_t|)}$. Thus

 \begin{equation*}
   E(|\{g_j \in S_t \cap Rec_t) \}|)  = \sum_{j=0}^{|S_t|} P(g_j \in Rec_t | g_j \in S_t)  
\end{equation*}

 \begin{equation*}
    E(|\{g_j \in S_t^c \cap Rec_t) \}|) = \sum_{i=0}^{|S_t^c|} P(g_i \in Rec_t | g_i \in S_t^c)  
\end{equation*}

 \begin{equation*}
   |Rec_t| = \sum_{j=0}^{|S_t|} \mathbbm{1}_{(h(g_j | g_j \in S)  \leq |Rec_t|)}   +\sum_{i=0}^{|S_t^c|} \mathbbm{1}_{(h(g_i | g_i \in S_t^c)  \leq |Rec_t|)}
\end{equation*}

and according to Assumption \ref{A3}, we have 
\begin{equation*}
 h(g_j | g_j \in S_t) \leq    h(g_j | g_j \in S_t^c).
 \end{equation*}

Thus, we have

\begin{equation*}
 max(h(g_j | g_j \in S_t)) \leq    min(h(g_j | g_j \in S_t^c)) \:  \forall g_j \in G
 \end{equation*}

 Assuming that $|Rec_t| \leq |S_t|$, we get $max(h(g_j | g_j \in S_t)) > |Rec_t|$, then $\mathbbm{1}_{(h(g_i | g_i \in S_t^c)  \leq |Rec_t|)} = 0 \:  \forall g_i \in S^c$. 
 
 Finally, we conclude that 
 
 \begin{equation*}
    \sum_{j=0}^{|S_t|} \mathbbm{1}_{(h(g_j | g_j \in S)  \leq |Rec_t|)} = |Rec_t|.
\end{equation*}
This means that
\begin{equation*}
    E(|\{g_j \in (S_t \cap Rec_t) \}|)  = |Rec_t|,
\end{equation*}
and finally
\begin{equation*}
    E(|\{g_j \in S_t^c \cap Rec_t) \}|) = 0.
\end{equation*}
\end{proof}
\end{lemma}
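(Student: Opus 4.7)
The plan is to prove the lemma by showing that, under Assumption~\ref{A3}, the top-$|Rec_t|$ ranked groups according to the relevance score $P_f$ all lie inside $S_t$, so that no mass of the recommendation falls on $S_t^c$ and the expected cardinality of $S_t^c \cap Rec_t$ collapses to zero. The natural starting point is a counting identity: the size of $Rec_t$ is fixed (say, equal to some constant $|Rec|$ independent of $t$), and by splitting $G = S_t \cup S_t^c$ together with linearity of expectation, we get
\begin{equation*}
  E(|S_t \cap Rec_t|) + E(|S_t^c \cap Rec_t|) = |Rec_t|.
\end{equation*}
Thus it suffices to argue that the first term already saturates the right-hand side.

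Next, I would formalize the ``purely exploitation'' nature of $f$ implied by Assumption~\ref{A3}. Writing $O = P_f(g \in Rel)$ for the relevance score and $h_P$ for the induced rank function on $G$, the recommender places $g$ in $Rec_t$ exactly when $h_P(g) \leq |Rec_t|$, so $P(g \in Rec_t) = \mathbbm{1}_{\{h_P(g) \leq |Rec_t|\}}$. The core observation is then that any $g \in S_t$ trivially satisfies $d(g, S_{t-1}) = 0$ while any $g' \in S_t^c$ has strictly positive distance; Assumption~\ref{A3} therefore forces $P_f(g' \in Rel) \leq P_f(g \in Rel)$, i.e., every seen group outranks every unseen group. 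Using the hypothesis $|S_t| > |Rec_t|$, the top $|Rec_t|$ ranks are consumed entirely by elements of $S_t$, which gives
\begin{equation*}
  \sum_{g \in S_t} \mathbbm{1}_{\{h_P(g) \leq |Rec_t|\}} = |Rec_t|,
\end{equation*}
and correspondingly the analogous sum over $S_t^c$ vanishes.

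Plugging this back into the counting identity yields $E(|S_t^c \cap Rec_t|) = 0$, which is the claim. The main obstacle I anticipate is the rank argument itself: Assumption~\ref{A3} is stated as a weak inequality, so one has to handle possible ties between scores of seen and unseen items (either by a tie-breaking convention that favors $S_t$, or by interpreting ``top $|Rec_t|$ ranks'' as a set of size at most $|Rec_t|$ drawn from seen items and noting the total must still equal $|Rec_t|$). A secondary subtlety is the precise role of $|S_t| > |Rec_t|$: it is exactly what guarantees there are enough high-ranked seen candidates to fill the recommendation slate, which is why the inequality appears in the hypothesis and must be invoked explicitly when concluding that the seen sum equals $|Rec_t|$.
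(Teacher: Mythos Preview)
Your proposal is correct and follows essentially the same route as the paper's own proof: both start from the counting identity $E(|S_t\cap Rec_t|)+E(|S_t^c\cap Rec_t|)=|Rec_t|$, formalize the exploitation strategy via the rank function $h_P$ with $P(g\in Rec_t)=\mathbbm{1}_{\{h_P(g)\le |Rec_t|\}}$, invoke Assumption~\ref{A3} to conclude that every seen group outranks every unseen group, and then use $|S_t|>|Rec_t|$ to saturate the first term. If anything, your discussion of tie-breaking under the weak inequality in Assumption~\ref{A3} is more careful than the paper, which simply asserts the rank dominance without comment; the one cosmetic slip is writing $d(g,S_{t-1})=0$ for $g\in S_t$ rather than for $g\in S_{t-1}$, but the paper is equally loose on this indexing.
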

What this lemma is suggesting is that a ranking algorithm satisfying \textbf{Assumption \ref{A3}} will find an element to recommend from a group that has been seen in order to increase the relevance of the recommendations. We stress again that we are dealing with \textit{groups} of items. This means that the user may see a new item but this item will be from an already seen group. Also note the important role of the recommendation list length, as a long recommendation list may allow for further discovery; but in practice, recommendation lists have a limited length.

This lemma lays the groundwork for the next result, as it allows us to quantify the nature of the Seen Groups Space.

\begin{lemma}
\label{theorem1}
By defining the measurable space $(\Omega ,G)$, where $\Omega$ is the sample space of item groups, and if $E(|\{g_j \in S_t^c \cap Rec_t\}|) = 0$ ;
then $S_t$ is a filtration on $\Omega$ and the random process $|S_t|$ is a martingale defined in the filtered probability space $(\Omega,S,\mathbb{P})$,  where $\mathbb{P}$ is a probability measure in $(\Omega ,G)$.

\end{lemma}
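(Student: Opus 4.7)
The plan is to verify the two claims of Lemma \ref{theorem1} in sequence, drawing on Assumption \ref{A1} for the filtration part and on the preceding lemma for the martingale part.

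For the filtration statement, I would directly invoke Assumption \ref{A1} (Vacuum), which gives $S_{t+1} = S_t \cup Rec_t$. This immediately yields $S_t \subseteq S_{t+1}$ for every $t$, so the sequence of seen-group sets is nondecreasing. Passing to the $\sigma$-algebras generated inside the measurable space $(\Omega, G)$, monotonicity of $\sigma$-generation gives $\sigma(S_t) \subseteq \sigma(S_{t+1})$, which is exactly the filtration condition. No probabilistic input is needed at this step; it is a purely set-theoretic consequence of the vacuum dynamics.

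For the martingale property of $|S_t|$, I would use the disjoint decomposition
\begin{equation*}
    S_{t+1} \;=\; S_t \;\sqcup\; (S_t^c \cap Rec_t),
\end{equation*}
which also follows from Assumption \ref{A1} together with basic set algebra. Counting cardinalities yields
\begin{equation*}
    |S_{t+1}| \;=\; |S_t| + |S_t^c \cap Rec_t|.
\end{equation*}
Since $|S_t|$ is determined by the history up to iteration $t$, it is measurable with respect to $\sigma(S_t)$, and conditioning leaves it invariant. Applying the hypothesis $E(|\{g_j \in S_t^c \cap Rec_t\}|) = 0$ to the remaining term closes the martingale identity $E[|S_{t+1}| \mid \sigma(S_t)] = |S_t|$. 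Integrability is trivial, since $|S_t| \leq |G|$ whenever the group set is finite.

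The main obstacle is the conceptual step of bridging the \emph{unconditional} hypothesis inherited from the previous lemma to the \emph{conditional} statement required for the martingale identity. The cleanest route is to observe that $|S_t^c \cap Rec_t|$ is a nonnegative random variable with zero mean, and is therefore almost surely zero; consequently its conditional expectation against any sub-$\sigma$-algebra also vanishes a.s. This sidesteps any delicate bookkeeping of the measurability of $Rec_t$ with respect to $\sigma(S_t)$. The rest of the argument is essentially administrative: making explicit that the measure $\mathbb{P}$ under which the cardinality hypothesis holds is the same one used to define the conditional expectation in the filtered space $(\Omega, S, \mathbb{P})$.
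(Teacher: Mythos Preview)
Your proposal is correct and follows essentially the same route as the paper: monotonicity of $S_t$ from Assumption~\ref{A1} gives the filtration, and the cardinality decomposition of $S_{t+1}=S_t\cup Rec_t$ together with the hypothesis $E(|S_t^c\cap Rec_t|)=0$ gives the martingale identity. The only substantive difference is that the paper uses inclusion--exclusion ($|S_{t+1}|=|S_t|+|Rec_t|-|S_t\cap Rec_t|$) and then invokes the ``law of total expectation'' to pass from the unconditional equality $E(|S_{t+1}|)=E(|S_t|)$ to the conditional one, whereas you use the disjoint decomposition $S_{t+1}=S_t\sqcup(S_t^c\cap Rec_t)$ and argue that a nonnegative variable with zero mean is a.s.\ zero, so its conditional expectation also vanishes; your bridge from the unconditional hypothesis to the conditional martingale identity is actually the more rigorous of the two.
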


\begin{proof}
the proof that $S_t$ is a filtration is straightforward from the definition since $\forall k<l$, we have $S_k \subset S_l$.

To prove that $|S_t|$ is a martingale, we need to first see that $E(|S_t|) < \infty$ because the number of seen item groups is finite.
Then we need to prove that $E(|S_{t+1}||S_t,S_{t-1},\cdots,S_0) = E(|S_t|).$ 
First we calculate $E(|S_{t+1}|)$:
\begin{equation}
    E(|S_{t+1}|) = E(|S_{t} \cup Rec_t|).
\end{equation}
\begin{equation}
    E(|S_{t+1}|) = E(|S_{t}|) + |Rec_t| - E(|S_{t} \cap Rec_t|).
\end{equation}
According to Assumption \ref{A3} ,
\begin{equation}
    E(|\{g_j \in S_t^c \cap Rec_t\}|) = 0.
\end{equation}
Therefore 
\begin{equation}
    E(|S_{t} \cap Rec_t|) = |Rec_t|.
\end{equation}
Hence
\begin{equation}
    E(|S_{t+1}|) = E(|S_{t}|).
\end{equation}
By the law of total expectation, we can conclude that
\begin{equation}
    E(|S_{t+1}||S_t,S_{t-1}...S_0) = E(|S_t|).
\end{equation}
Hence the process $|S_t|$ is a martingale defined in the filtered probability space $(\Omega,S,\mathbb{P})$.
\end{proof}
\begin{theorem}
\label{theorem2}
Let $\Delta S_t = |S_{t+1}| - |S_{t}|$ be the martingale difference defined on the filtered space $(\Omega,S,\mathbb{P})$. We define  the average user discovery $\Delta_n S = \frac{1}{n} \sum_{t=0}^{n} \Delta S_t$. Then, with probability $1-\delta$, we have:
\begin{equation}
    \frac{1}{n} \sum_{t=0}^{n} \Delta S_t \leq \frac{\ln(\frac{1}{\delta}) |Rec|^2}{2n}.  
    \end{equation}
\end{theorem}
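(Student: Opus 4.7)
The plan is to leverage the martingale structure of $|S_t|$ established in Lemma~\ref{theorem1} together with a concentration inequality for bounded martingale differences. The sequence $\{\Delta S_t\}$ consists of martingale differences with respect to the filtration $\{S_t\}$, and each increment satisfies the deterministic bound $0\le \Delta S_t \le |Rec|$ almost surely, because at most $|Rec|$ new groups can be appended to $S_t$ in any single round. These two properties --- bounded differences and zero conditional mean --- place us squarely in the Azuma--Hoeffding (or, equivalently, Chernoff-type) regime.

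First I would apply Hoeffding's lemma to each $\Delta S_t$ conditional on $S_{t-1}$ to obtain a one-step exponential-moment bound of the form $E[\exp(\theta\,\Delta S_t)\mid S_{t-1}]\le \exp(\theta^{2}|Rec|^{2}/8)$. Iterating along the filtration and using the tower property then yields
\begin{equation*}
E\Bigl[\exp\Bigl(\theta\sum_{t=0}^{n}\Delta S_t\Bigr)\Bigr]\le \exp\bigl(n\theta^{2}|Rec|^{2}/8\bigr).
\end{equation*}
Markov's inequality applied to the exponential moment next gives a Chernoff-style tail bound
\begin{equation*}
\Pr\!\Bigl(\sum_{t=0}^{n}\Delta S_t \ge \lambda\Bigr) \le \exp\!\bigl(-\theta\lambda + n\theta^{2}|Rec|^{2}/8\bigr).
\end{equation*}
I would then choose $\theta$ so that the right-hand side equals $\delta$, solve for $\lambda$, divide by $n$ to pass from the partial sum to the average $\Delta_n S$, and rearrange algebraically to recover the claimed form $\ln(1/\delta)|Rec|^{2}/(2n)$.

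The main obstacle I anticipate is pinning down the precise constants and the $n$-scaling: a naive two-sided Azuma--Hoeffding bound delivers a $|Rec|\sqrt{\ln(1/\delta)/n}$ rate, whereas the theorem states the faster $|Rec|^{2}\ln(1/\delta)/n$ rate. This discrepancy suggests that the argument either exploits a sharper, variance-aware (Bernstein-type) inequality --- which is warranted here because Lemma~\ref{theorem1} forces $\Delta S_t$ to concentrate at zero, making the conditional variance much smaller than the worst-case $|Rec|^{2}$ --- or uses a carefully chosen (non-optimised) value of $\theta$ that removes the $\sqrt{n}$ factor. Verifying that the selected inequality's conditional-expectation hypotheses are actually met by the martingale $|S_t|$ on the filtered probability space $(\Omega,S,\mathbb{P})$ of Lemma~\ref{theorem1}, and then tracking the constants faithfully through the optimisation in $\theta$, is therefore the delicate part of the proof.
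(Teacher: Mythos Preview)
Your approach is exactly the paper's: verify the deterministic bounds $0\le \Delta S_t\le |Rec|$, invoke Azuma--Hoeffding for the bounded martingale-difference sequence, set the tail probability equal to $\delta$, and solve for the threshold.

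Regarding the obstacle you anticipate: the paper does \emph{not} employ a Bernstein-type refinement or any special choice of $\theta$. It simply writes Azuma--Hoeffding in the form
\[
P\Bigl(\tfrac{1}{n}\sum_{t=0}^{n}\Delta S_t>\mu\Bigr)\le\exp\Bigl(-\tfrac{2n^{2}\mu}{\sum_{t} |Rec_t|^{2}}\Bigr)=\exp\Bigl(-\tfrac{2n\mu}{|Rec|^{2}}\Bigr),
\]
with $\mu$ appearing \emph{linearly} in the exponent, and then inverts $\delta=\exp(-2n\mu/|Rec|^{2})$ to obtain $\mu=\ln(1/\delta)\,|Rec|^{2}/(2n)$. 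Your observation that the standard bounded-difference inequality actually delivers the slower $|Rec|\sqrt{\ln(1/\delta)/n}$ rate is well founded; the faster $1/n$ rate in the stated theorem comes from this linear-in-$\mu$ exponent in the paper's quoted form of the inequality, not from any variance-aware sharpening. So you should not search for a deeper concentration argument here---just apply Azuma--Hoeffding directly as you outlined and match the paper's algebra.
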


\begin{proof}
Since $|S_t|$ is a martingale, it is given that $\Delta S_t$ is a martingale difference. Furthermore, we can verify that $\Delta S_t$ is bounded. In fact, $\Delta S_t \geq 0$ because the increase is always positive as iterations advance, and also   $\Delta S_t \leq |Rec_t|$ since the maximum number of new item groups that are going to be added to $|S_t|$ is the number of recommended item groups.
Therefore, using the Azuma-Hoeffding inequality we obtain that $\forall \mu>0$

\begin{equation}
    P(\frac{1}{n} \sum_{t=0}^{n} \Delta S_t > \mu) \leq \exp(-\frac{2 n^2 \mu}{\sum_{t=0}^{n} |Rec_t|^2}),
\end{equation}
and as the number of groups in the recommended list is constant,
\begin{equation}
    P(\frac{1}{n} \sum_{t=0}^{n} \Delta S_t > \mu) \leq \exp(-\frac{2 n \mu}{ |Rec|^2}),
\end{equation}
If we consider $\delta = P(\frac{1}{n} \sum_{t=0}^{n} \Delta S_t > \mu)$, then solving for $\mu$ will give the desired result.
\end{proof}

Lemma \ref{theorem1} and Theorem \ref{theorem2} show that the average discovery of the user is decreasing with the number of recommendation iterations $n$. 
It also relates this decrease to the length of the recommendation list. This is expected because longer recommendation lists increase the chance to see new groups of items.

We conclude our theoretical analysis with two corollaries that provide convergence results for the average \textit{user discovery} and for the \textit{blind spot}. 
\begin{corollary}
\label{corollary1}
The average user discovery $\Delta_n S$ converges to 0 almost surely with increasing iterations. 
\end{corollary}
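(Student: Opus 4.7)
The plan is to convert the finite-iteration high-probability bound of Theorem \ref{theorem2} into an almost-sure convergence statement via the first Borel--Cantelli lemma. The key observation is that the bound in the proof of Theorem \ref{theorem2}, namely
\begin{equation*}
P\!\left(\Delta_n S > \mu\right) \leq \exp\!\left(-\frac{2 n \mu}{|Rec|^2}\right),
\end{equation*}
decays exponentially in $n$ for every fixed $\mu > 0$, and is therefore summable in $n$.

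First, I would fix an arbitrary $\varepsilon > 0$ and apply the Azuma--Hoeffding tail bound above with $\mu = \varepsilon$ to obtain $\sum_{n=1}^{\infty} P(\Delta_n S > \varepsilon) \leq \sum_{n=1}^{\infty} \exp(-2 n \varepsilon / |Rec|^2) < \infty$. The first Borel--Cantelli lemma then yields $P(\Delta_n S > \varepsilon \text{ i.o.}) = 0$, so that $\limsup_{n\to\infty} \Delta_n S \leq \varepsilon$ almost surely.

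Next, to upgrade this to convergence to $0$, I would intersect over a countable family of thresholds $\varepsilon_k = 1/k$, $k \in \mathbb{N}$. Since a countable intersection of almost-sure events is almost sure, we obtain $\limsup_{n\to\infty} \Delta S_n \leq 1/k$ for every $k$ simultaneously on an event of probability one. Combined with the one-sided bound $\Delta_n S \geq 0$, which follows from $S_t \subseteq S_{t+1}$ under Assumption \ref{A1}, this gives $\Delta_n S \to 0$ almost surely.

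There is no serious obstacle here; the only point that requires care is the passage from ``for each fixed $\varepsilon$, a.s.\ bound'' to ``a.s., for all $\varepsilon$ simultaneously,'' which is handled by restricting attention to a countable dense set of thresholds as above. The nonnegativity of $\Delta_n S$ is what lets us deduce two-sided convergence from a one-sided tail bound.
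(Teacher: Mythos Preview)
Your argument is correct and rests on the same ingredient as the paper's proof---the exponential Azuma--Hoeffding tail bound from Theorem~\ref{theorem2}. The paper's own proof is terser: it simply notes that $P(\Delta_n S > \mu) \to 0$ as $n \to \infty$ and concludes. Read literally, that only yields convergence in probability; your explicit use of the summability of the tail bounds together with the first Borel--Cantelli lemma (and the countable-threshold pass to handle all $\varepsilon$ simultaneously) is precisely what is needed to upgrade this to almost-sure convergence. So your route is the same in spirit but supplies the missing rigor.
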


\begin{proof}
The proof is straightforward from the result of Theorem \ref{theorem2}. In fact, using the Azuma-Hoeffding bound, we can show that  when $n \rightarrow \infty$, $P(\frac{1}{n} \sum_{t=0}^{n} \Delta S_t > \mu ) = 0$, and hence $P(\frac{1}{n} \sum_{t=0}^{n} \Delta S_t = 0 ) = 1$, which concludes the result.
\end{proof}

\textbf{Corollary \ref{corollary1}} further proves that the generic recommender system is rapidly forming a filter bubble \textit{by keeping a blind spot of item groups that is out of reach of the user discovery}. We try to formulate this relationship with Corollary \ref{corollary2}.

\begin{corollary}
\label{corollary2}
Let $|\Delta B_t| = ||B_{t+1}| - |B_{t}||$ be the evolution of the blind spot for a given user. We define  the average decrease of the blind spot as $\Delta_n B = \frac{1}{n} \sum_{t=0}^{n} \Delta B_t$. Given a recommender system with a decreasing error function $e(t)$ (i.e the accuracy of the recommender system increases after each feedback loop iteration), where $e(t) = \left |S_{t}\cap \overline{Rel}  \right | $, then:

If $\Delta_n S$ converges to $0$ almost surely when $n$ tends to infinity, then $\Delta_n B$ converges to $0$ almost surely.

\end{corollary}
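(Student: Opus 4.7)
The plan is to reduce the convergence of $\Delta_n B$ to that of $\Delta_n S$ via an exact accounting identity linking the sizes of the seen set, the blind spot, and the error $e$.

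First, I would partition $S_t$ into its relevant and irrelevant portions to write $|S_t \cap \mathrm{Rel}| = |S_t| - e(t)$. Since $B_t = S_t^c \cap \mathrm{Rel}$ and $|\mathrm{Rel}|$ is a fixed constant of the problem, this yields the identity
\begin{equation*}
|B_t| = |\mathrm{Rel}| - |S_t| + e(t).
\end{equation*}
Differencing across consecutive iterations gives
\begin{equation*}
|B_{t+1}| - |B_t| = -\Delta S_t + \bigl(e(t+1) - e(t)\bigr),
\end{equation*}
and, since $\Delta S_t \geq 0$ is guaranteed by the Vacuum Assumption, the triangle inequality yields $|\Delta B_t| \leq \Delta S_t + |e(t+1) - e(t)|$.

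Averaging this bound over $n$ iterations, I obtain
\begin{equation*}
\Delta_n B \;\leq\; \frac{1}{n}\sum_{t=0}^{n} \Delta S_t \;+\; \frac{1}{n}\sum_{t=0}^{n} |e(t+1) - e(t)|.
\end{equation*}
The first term tends to $0$ almost surely by the hypothesis inherited from Corollary \ref{corollary1}. For the second term, the assumed monotonicity of $e(t)$ collapses the sum of absolute differences into the telescoping quantity $|e(0) - e(n+1)|$, which is uniformly bounded (by $|G|$, since $e$ takes values in a subset of a finite space). Dividing by $n$ drives this contribution to $0$ as well, so $\Delta_n B \to 0$ almost surely.

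The main subtlety, rather than obstacle, is the correct use of the monotonicity hypothesis on $e(t)$. What is really needed is only that $\sum_t |e(t+1) - e(t)|$ grow sublinearly in $n$; the stated decreasing assumption delivers exactly this via telescoping together with the boundedness of $e$. Without such control, fluctuations of $e$ could in principle overwhelm the vanishing of $\Delta S_t$ and invalidate the averaged bound.
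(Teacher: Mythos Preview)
Your proposal is correct and follows essentially the same route as the paper: derive the identity $|B_{t+1}|-|B_t| = -\Delta S_t + (e(t+1)-e(t))$, apply the triangle inequality, then telescope the error term using the monotonicity of $e$. Your partition of $S_t$ into its relevant and irrelevant parts reaches that identity a bit more directly than the paper's inclusion--exclusion manipulations, but the argument is otherwise the same.
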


\begin{proof}
We have:
\begin{equation}
\label{blind}
    \Delta  B_t= \left |\left | \overline{S_{t+1}} \cap Rel \right |- \left | \overline{S_{t}} \cap Rel \right |\right |.
\end{equation}
Since 
\begin{equation}
    \left | \overline{S_{t+1}} \cap Rel \right | =   |\overline{S_{t+1}}| + |Rel|  - |\overline{S_{t+1}} \cup Rel|
\end{equation}
and 
\begin{equation}
    |\overline{S_{t+1}}| = |G| - |S_{t+1}|,
\end{equation}
we obtain
\begin{equation}
    \label{eq}
    \left | \overline{S_{t+1}} \cap Rel \right | =  |G| - |S_{t+1}| + |Rel| -|\overline{S_{t+1}} \cup Rel|. 
\end{equation}
By inserting (\ref{eq}) in (\ref{blind}) for $t$ and $t+1$, we obtain
\begin{equation}
    \Delta  B_t= \left | - \Delta S_t - \left | \overline{S_{t+1}} \cup Rel \right | + \left | \overline{S_{t}} \cup Rel \right |\right |.
\end{equation}
Hence and considering that $ \left | \overline{S_{t+1}} \cup Rel \right | = |G| - \left | S_{t+1} \cap \overline{Rel} \right |$, we obtain
\begin{equation}
    \Delta  B_t= \left |- \Delta S_t + \left | S_{t+1} \cap \overline{Rel} \right | - \left | S_{t} \cap \overline{Rel} \right |\right |
\end{equation}
\begin{equation}
    \Delta  B_t= \left |- \Delta S_t + e(t+1) - e(t)\right |.
\end{equation}
Therefore using the triangle inequality, we deduce that 
\begin{equation}
    \frac{1}{n}\sum_{t=0}^{n} \Delta B_t \leqslant \Delta_n S + \frac{1}{n} \sum_{t=0}^{n} \left |e(t+1) - e(t) \right |.
\end{equation}
Hence
\begin{equation}
    \frac{1}{n}\sum_{t=0}^{n} \Delta B_t \leqslant \Delta_n S + \frac{1}{n} (e(0)-e(n)).
\end{equation}
Since the error is a decreasing function of $n$,  $\Delta_n B$ is positive, and $\Delta_n S$ converges to 0 almost surely when n tends to infinity, $\Delta_n B$ converges to 0 almost surely when n tends to infinity.
\end{proof}

Our theoretical results prove that a collaborative filtering recommender system that aims at increasing its accuracy by recommending relevant items, is prone to forming a blind spot around a certain group of items (depending on the initial state of the system). This result also shows that the existence of the blind spot is dependent on the strategy of the recommender system. A strategy aiming at improving the relevance of the recommended items will fail under our assumptions at exploring new items.

\section{Empirical evaluation}

\subsection{Ranking Assumption Validation}
\label{Empval}
We start by empirically evaluating the validity of the Ranking \textbf{ Assumption \ref{A3}} by testing the following null hypothesis, where we use the same notation as in\textbf{ Assumption \ref{A3}}:

\noindent\fbox{%
    \parbox{\linewidth}{%
        \textbf{Null Hypothesis ($H_0$):} If two item groups have different distances to the seen group set $S_t$ (one group is already seen and the other group is not seen), then their  ranking probability $P_f$ is not significantly different.
    }%
}

In order to test this hypothesis, we sample a number of users that have not rated certain groups of items and evaluate the ranking behavior of the algorithm toward these groups compared to the seen groups by comparing the averages of the predicted ratings between the two groups. Finally we test the hypothesis by performing a t-test to see if there is a significant difference.

In our experiments, we used the Movielens 1M Dataset \footnote{https://grouplens.org/datasets/movielens/1m/}, which has 6040 users, 3952 items, and one million ratings in total. One reasonable way to group items is based on the genres provided with the dataset. There is a total of 18 genres, some items have many genres, thus we verify that the user has already seen \textbf{any} of the item's genres. The reason behind this grouping choice is to prevent the recommendation algorithm from explicitly using the item groups in the training. To clarify this further, if we grouped the items by their latent representation, as learned by Matrix Factorization \cite{5197422} for instance, our test would be biased due to the fact that this segmentation is used in the prediction step.

For Matrix Factorization, a five-folds cross validation hyperparameter tuning resulted in the following hyperparamters: \{Learning rate: 0.001, Latent dimension: 10, $L_2$ regularization coefficient: 0.01, and number of epochs: 300\}. We used Stochastic Gradient Descent (SGD) to train our model and computed the final rating predictions based on the dot product of the user and item latent factor vectors. We further repeated our experiments 10 times. 

\begin{table}
\caption{Comparison between the mean of the predicted ratings for items from the seen groups (same genre) and items from unseen groups.}
\label{tab:my-table}
\begin{center}
{%
\begin{tabular}{lcll}
\hline
                                  \multicolumn{1}{l}{Item's group} & \multicolumn{1}{l}{Mean of predicted rating} & Variance                                 & P-value                                \\ \hline
\multicolumn{1}{l}{Seen Group }   & \textbf{3.13}                                        & \multicolumn{1}{c}{1.5 $\times 10^{-5}$} & \multirow{2}{*}{2.32 $\times 10^{-10}$} \\ \cline{1-3}
\multicolumn{1}{l}{Unseen Group} & 3                                             & 7 $\times 10^{-4}$                        &                                         \\ \hline
\end{tabular}%
}
\end{center}
\end{table}

\begin{figure}[ht]
\begin{center}
\centerline{\includegraphics[width=0.55\linewidth]{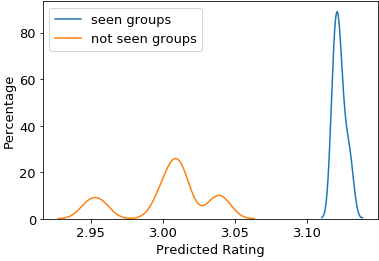}}
\caption{\textbf{The predicted rating distribution of the seen item groups is significantly different from that of the unseen item groups. The predicted ratings affect the ranking of the items in the recommendation list as stated in Assumption \ref{A3}}}
\label{fig:dist}
\end{center}
\end{figure}

The results, shown in \textbf{Table \ref{tab:my-table}}, lead us to reject the null hypothesis since there is a significant difference (p-value $< 0.01$) between the two averages. \textbf{Figure \ref{fig:dist}} further confirms this difference between the two distributions: The seen item groups have a higher predicted rating on average than the items from the unseen group. This confirms, empirically, that our ranking assumption is reasonable. 

What makes this an interesting result is the fact that \textit{we did not use} the genres in the learning and the recommendation step. In fact, the collaborative filtering algorithm was able to detect this dependency based on the similarity of the items' ratings. This is expected as we believe that the recommendation process used to collect the data may have exploited this similarity, and thus exacerbated the limited exposure of the users to different genres.

\subsection{Asymptotic behavior of the user discovery empirical validation}

\subsubsection{Experimental Design}
In order to simulate an iterative recommendation process, we need access to a complete rating matrix. In fact, simulating the sequential dependency needs the true rating of the user to be added at iteration $t-1$ in order to train $f_t$. Unless we perform real world experiments using a production environment, this is impossible to achieve using the available data due to the high sparsity of the data.

For this reason, we use a semi-synthetic data using the Movielens 1M dataset, as was done in \cite{DBLP:journals/corr/SchnabelSSCJ16,khenissi2020modeling}. The process to construct the dataset is done by running a matrix completion algorithm on the original dataset to generate a complete rating matrix. Then, we scale the ratings of each user using a mapping function $g: \hat{R} \rightarrow \{1,2,3,4,5\}$. By defining five percentiles $(p_1,p_2,p_3,p_4,p_5)$ from the predicted ratings of the user, we apply the following mapping: $g(p_1)=1$, $g(p_2)=2$, $g(p_3)=3$, $g(p_4)=4$, $g(p_5)=5$. We remove by this procedure the user specific bias where some users would have a tendency to provide higher ratings or lower ratings compared to others and we scale the predictions into a range that is similar to the original input data. 

The experimental protocol, after constructing the complete matrix, is described in \textbf{Algorithm \ref{alg:exp}}

\begin{algorithm}[tb]
   \caption{Experimental Simulation Steps}
   \label{alg:exp}
\begin{algorithmic}
   \STATE {\bfseries Input:} Ratings $S_t$, Genres, number of runs $nruns$, number of iterations, $iterations$
   \STATE {\bfseries Output:} $(|S_t|)_t , (|B_t|)_t$
   \FOR{$i=1$ {\bfseries to} nruns}
   \FOR{$t=1$ {\bfseries to} iterations}

   \STATE Train $f_t$                    // Train Matrix factorization model.
   \STATE $Rec_t = f(S_t)$              // Select top-n recommendations.
   \STATE $S_{t+1} = S_t \cup Rec_t$.   // Add recommendation's true ratings to  $S_t$
   \STATE Calculate $|S_t|$ and $|B_t|$.
   \ENDFOR
   \ENDFOR

\end{algorithmic}
\end{algorithm}

In our simulations, we first assume that the user rates all the recommended items according to the Perfect Feedback \textbf{Assumption \ref{A2}}. Later, we will explore a relaxation of this assumption where the user sees all the recommended items, but rates a few items based on their ranking in the list, which is similar to \cite{10.1145/2043932.2043955}.
We perform 10 runs and set the length of the recommendation list to $|Rec_t| = 10$.

\subsubsection{Results and discussion}
\begin{figure*}[ht]
\centering    

\subfigure[Evolution of $S_t$ and $B_t$ under perfect feedback]{\includegraphics[ width=0.32 \textwidth]{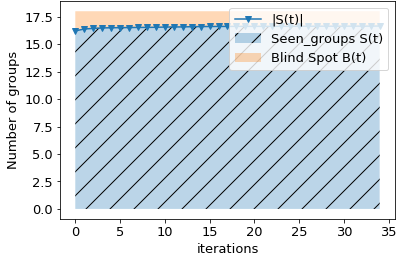} \label{fig:seenbperf}}
\subfigure[Evolution of $ \Delta_n S$under perfect feedback]{\includegraphics[ width=0.32 \textwidth]{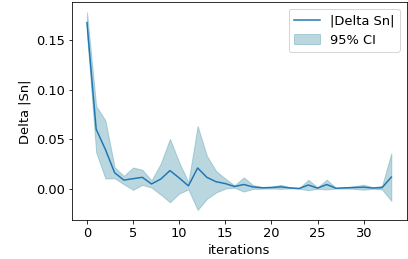} \label{fig:deltasnperf}}
\subfigure[Comparison of $\Delta_n S$ to the the theoretical bound (Theorem \ref{theorem2}) for different levels of $\delta$]{\includegraphics[ width=0.32 \textwidth]{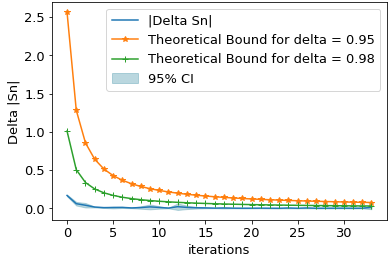}\label{fig:theoperf}
}

\caption{\textbf{Results with the \underline {Perfect Feedback Assumption \ref{A2}}: (a) shows the evolution of $|S_t|$ through iterations. The shaded area shows the cardinality of the seen item groups (genres) and the non-shaded area shows the blind spot's cardinality. (b) shows the evolution $\Delta_n S$ and how it compares to the theoretical bound with different levels of confidence. (b) clearly confirms the hyperbolic decrease of $\Delta_n S$ in Theorem \ref{theorem2}, with the shaded area representing the $95 \%$ confidence interval based on 10 runs.}  }
\label{fig:perf}
\end{figure*}

\begin{figure*}[ht]
\centering    
\subfigure[Evolution of $S_t$ and $B_t$ under imperfect feedback]{\includegraphics[ width=0.32 \textwidth]{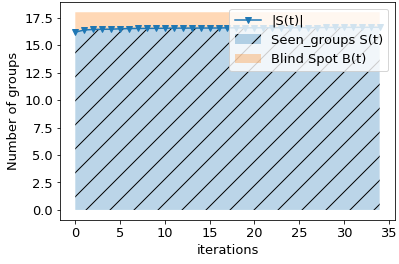} \label{fig:1}}
\subfigure[Evolution of $\Delta_n S$ under imperfect feedback]{\includegraphics[ width=0.32 \textwidth]{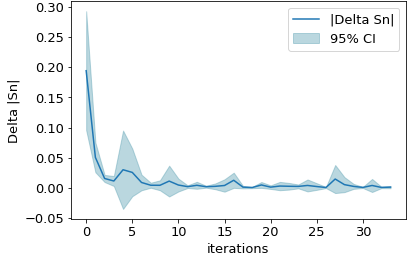} \label{fig:seenbperf}}
\subfigure[Evolution of $S_t$]{\includegraphics[ width=0.32 \textwidth]{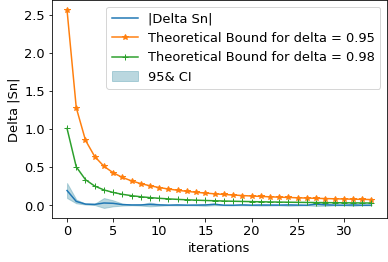} \label{fig:imp}}

\caption{\textbf{Results with the \underline{Relaxation of Perfect Feedback Assumption \ref{A2}}: The user sees the entire recommendation list but only rates a few of them based on a probability that depends on the ranking of the item. We notice that even without the assumption of perfect feedback from the user, the behavior of the algorithm is similar to the previous experience depicted in Figure \ref{fig:perf}}}
\label{fig:imperf}
\end{figure*}

We next investigate the evolution of $|S_t|$ throughout the iterations. \textbf{Figure \ref{fig:seenbperf}} shows how the cardinality of the set of seen item groups stagnates with a negligible increase. Note here that we did not start from the initial state of the system since the available data is collected by running several iterations of recommendations.

We also study the evolution of average user discovery $\Delta_n S$, as shown in \textbf{Figure \ref{fig:deltasnperf}}, which confirms the hyperbolic decrease rate in Theorem \ref{theorem2}. This shows that the average user discovery is shrinking throughout the iterations by converging to zero. The variance present in some of the collected points is due to the fact that our data is semi-synthetic and this affects the feedback loop.
Despite this sporadically high variance, we can see a general decreasing behavior in $\Delta_n S$.

To further validate our theoretical results, we plot the theoretical bound for two different confidence thresholds. \textbf{Figure \ref{fig:theoperf}} shows the asymptotic relationship, with increasing iterations, between the empirical result and the theoretical bound. 

With these results, we showed that despite not including the groups in the learning process of the algorithm, the recommender system fails to recommend new genres to the user and this is exacerbated when the process advances through more iterations. This decrease explains known phenomena such as the filter bubble or echo chambers \cite{Jiang_2019}. These consequences tend to get worse as a result of the feedback loop. 

To investigate the impact of the Perfect Feedback Assumption (\textbf{Assumption \ref{A2}}), we repeated the experiments by making the user rate only a part of the recommendation list according to a discovery probability \cite{10.1145/2043932.2043955} related to an item's rank in the recommendation list (the lower the item is in the list, the smaller the chance it will be rated). The results in \textbf{Figures \ref{fig:imp}} show that there is no significant difference, and the behavior of $|S_t|$ and $\Delta_n S$ remains similar to the previous experiment that was depicted in \textbf{Figure \ref{fig:perf}}. This shows that even without rating all the items and seeing more items than reporting to the algorithm, the algorithm is still failing to discover new groups.

In our experiments, we considered a finite number of groups of items which intuitively should discourage the algorithm from creating a filter bubble (If the number is small, then there is a higher chance for the algorithm to discover all items).
In fact, in practice, the number of items, in particular taking into account new items that get added continuously, may be huge and thus the cardinality of the item space, when considered over time, can be considered to approach the infinite case. Yet, in our experimental process, we have tested on a finite number of groups. The total number of groups in the Movielens 1M data was below 20. Even in this limited scenario,  we could see, as shown in our experimental results, that the algorithm is converging to a state that fails to explore all the groups, again despite the low (finite) number.

\subsection{Effect of a greedy exploration approach on human discovery}
\label{explorationsection}

\begin{figure}[ht]
\centering    

\subfigure[Evolution of $S_t$ and $B_t$ using $\epsilon$-greedy MAB strategy]{\includegraphics[ width=0.48 \textwidth]{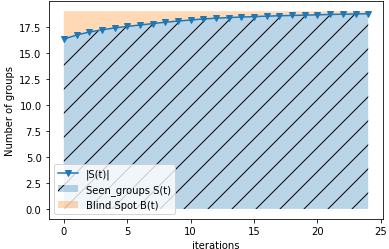} \label{fig:1mab}}
\subfigure[Comparison of $\Delta_n S$ to the theoretical bound using $\epsilon-$greedy MAB strategies for different level of $\epsilon$]{\includegraphics[ width=0.48 \textwidth]{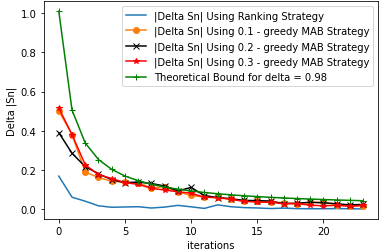} \label{fig:2mab}}

\caption{\textbf{Results using a greedy exploration strategy with Multi-Armed Bandits. \ref{fig:1mab} shows that using a simple random search will decrease the blind spot size, however the increase rate in the human discovery does not exceed the theoretical bound as shown in \ref{fig:2mab}}}
\label{fig:mab}
\end{figure}

To further challenge our assumptions, mainly \textbf{Assumption \ref{A3}}, where we assume a pure exploitation based ranking strategy, we limit the validity of the assumption by using a simple exploration strategy. The main objective is to empirically evaluate the impact of \textbf{Assumption \ref{A3}} on the theoretical findings.

Hence, in this last part of our experiments, we apply a greedy Multi-Armed Bandits (MAB) recommendation strategy \cite{robotica_1999} and provide observations on the behavior of human discovery. We also evaluate the effect of the degree of the exploration by varying $\epsilon$ in the $\epsilon-$greedy MAB approach. Basically, we select a proportion of items in the recommendation list measured by $\epsilon < 1$, where we perform a random selection instead of a ranking approach. This approach violates \textbf{Assumption \ref{A3}} and hence we cannot predict the asymptotic behavior of the human discovery.  

The results in \textbf{Figure \ref{fig:mab}}  shows that Multi-Armed Bandits strategies are effective in reducing the blind spot of the user. However, the average human discovery increase $\Delta_n S$ does not exceed the theoretical bound. This leads us to think that the naive exploration is not enough to provide the user with a discovery potential that is consistent through the iterations. Furthermore, this motivates the search for better exploration methods that target the human discovery explicitly by taking into consideration the theoretical behavior of such phenomena.

These findings, although not surprising, as the bias resulting from recommender systems is becoming well known, should lay a theoretical foundation to more targeted methods to solve this problem. Such solutions may for instance, include advanced loss functions that incorporate $\Delta S$ into the optimization process, new exploration methods to account for the blind spots, or new algorithms that include the user discovery in their modeling.

\section{Conclusion}

We presented a simplified theoretical framework to study the closed feedback loop in a recommender system. We then proved that under a few assumptions, the iterative behavior of the recommender system tends to limit the discovery of the user; and we also proved a theoretical bound on the increase of this discovery. Finally, we presented a simulation framework, using semi-synthetic data, to study the behavior of an iterative recommender system throughout the iterations.

We have also challenged the pure exploitation strategy by experimenting with a Multi-Armed Bandits approach. Results show that our theoretical findings are still valid and thus confirm that the need for more targeted exploration strategies is important.

The limitations of our work come mainly from the assumptions that we have stated. In fact, the user, in real life, is exposed to different external influences, including from different recommender system platforms. Although some of these external exposures may be related either directly (for instance through external data, cookies, etc), or indirectly (through similar social circles), studying an isolated recommender system is limited and needs to be extended to a more general framework.

Our theoretical and empirical  findings do not necessarily translate into a judgement about the acceptability of the convergence of the user discovery and blind spot. It is possible for certain users to be less satisfied with this convergence than others, and the implications vary depending on the item domain (e.g. news vs. movies). That said, we remind the reader that our definition of \textit{blind spot} refers to item groups that are \textit{relevant} to the user.

What we hope to achieve in the future is to combine optimizing for $|\Delta S|$ and $|\Delta B|$ in the training process of the algorithm so that the recommender system can become aware of its own iterative bias after each iteration.

\begin{acks}
This work was supported by National Science Foundation grant NSF-1549981.
\end{acks}

\bibliographystyle{ACM-Reference-Format}
\bibliography{sample-base}

\end{document}